\newcommand{\ddr}{\mathrm{d}}
\newcommand{\edr}{\mathrm{e}}
\newtheorem{thm}{\noindent Theorem}
\newcommand{\E}{\mathbb{E}}
\renewcommand{\P}{\mathbb{P}}
\newcommand{\R}{\mathbb{R}}
\newcommand{\Fcr}{\mathcal{F}}
\def\eqd{\stackrel{\mbox{\scriptsize{d}}}{=}}
\theoremstyle{remark}
\theoremstyle{plain}
\begin{document}

\title{\vspace{-50pt} Limiting behaviour of the stationary search cost distribution driven by a generalized gamma process}

\author{
Alfred Kume\textsuperscript{a} \hspace{15pt}
Fabrizio Leisen\textsuperscript{a}\thanks{Corresponding Author email address:  \href{mailto:fabrizio.leisen@gmail.com}{fabrizio.leisen@gmail.com} (F. Leisen).} \hspace{15pt}
 Antonio Lijoi\textsuperscript{b}\thanks{Also affiliated to the Bocconi Institute of Data Science and Analytics, Italy.}  
        \\ 
        \vspace{5pt}
        \\
        {\centering {\small \textsuperscript{a} 
        University of Kent, UK \hspace{18pt}
        \textsuperscript{b}
             Bocconi University, Italy}} \vspace{5pt} \\
     }

\date{}
\maketitle

\abstract{Consider a list of 
labeled objects 
that are organized in a heap. At each time, object $j$ is selected with probability $p_j$ and moved to the top of the heap. This procedure defines a
Markov chain on the set of permutations which is referred to in the literature 
as Move-to-Front rule. The present contribution focuses on the stationary search cost, namely the position of the requested item in the heap when the Markov chain is in equilibrium. We consider the scenario where the number of objects is infinite and the probabilities $p_j$'s are 
defined as the normalization of the increments of a subordinator. In this setting, we provide an exact formula for the moments of any order of the stationary search cost distribution. We illustrate the new findings 
in the case of a generalized gamma subordinator and deal with an extension to the two--parameter Poisson--Dirichlet process, also known as Pitman--Yor process.  \\
\bigskip

\noindent
\textbf{Keywords:} $\gamma$--stable process, Generalized gamma process, Heaps, Move-to-front rule, Search cost distribution, Subordinator, Two--parameter Poisson--Dirichlet process 
}

\section{Introduction}\label{sec:intro}

The Move-to-Front (MtF) rule, also known as Tsetlin library or Heaps process, identifies a well--known stochastic process, which arises in various applied research areas. It is used to describe an experiment whereby objects are requested at a random instant with a certain probability from a finite set of items in a serial list; when an object is requested, it is moved to the front of the list and the positions of the remaining items are unchanged. This procedure defines an underlying Markov chain on the set of permutations and an important goal that has been pursued in this area is the determination of the  probability distribution of the search cost, which is defined as the depth of the requested item in the list. The investigation of the MtF rule has attracted considerable interest of several authors working at the interface of Computer Science and Probability. For example, \cite{fill96a,fill96b} and \cite{fh} are important reference works in the area and investigated the model under different assumptions 
for the request probabilities. A noteworthy extension occurs when the number of items is infinite. 
In this context,  \cite{bp,bhp} and \cite{bhp2} studied the behaviour of the search cost distribution for random request probabilities defined by normalized positive independent and identically distributed (i.i.d.) random variables. \cite{DoubleHattori} illustrate a link between sales ranks of online shops and the MtF rule. \cite{Jele99} considered different scaling limits for the search cost when suitable optimality conditions in the list are  specified. In \cite{bf} the behaviour of the search cost distribution for an infinite number of objects is studied under the assumption that a law of large numbers is satisfied on the unnormalized weights that are used to identify the probability masses with which items are requested; additionally, some connections with results of \cite{Jele08} are displayed. Finally, it is worth remarking that the stationary  distribution of the MtF Markov chain has been employed for modeling partial ranking data drawn from a finite collection of items. Such a model has been named after R.L. Plackett and R.D. Luce [see \cite{Plackett} and \cite{luce}]. A nice recent application, within a Bayesian nonparametric framework, has been proposed in \cite{Caron14}. See also references therein. 

The focus of the present paper will be on the determination of the stationary search cost distribution when the number of objects goes to infinity and the random request probabilities arise from the normalization of the increments of a subordinator. We will show that the limiting Laplace transform of the search cost distribution can be expressed in terms of the Laplace exponent of the underlying subordinator. As a by--product of this main result, we will be able to evaluate the moments of any order for the stationary search cost distribution. We will test our findings on the normalized generalized gamma process. Not only this includes as special cases other noteworthy instances of random discrete probability measures, such as the Dirichlet process and the normalized $\gamma$--stable subordinator, but it has also been extensively applied in the Bayesian nonparametric inference literature. See, e.g., \cite{lmp}. We will also provide a pointer to the two--parameter Poisson--Dirichlet process, also known as Pitman--Yor process (\cite{PitmanYor}), thanks to its representation as a mixture of generalized gamma processes with a base measure having random total mass.

\subsubsection*{Outline of the paper}

In Section 2 we will provide some preliminary notions about the stationary search cost distribution and recall a few relevant results which will be used in the paper. 
In Section 3 we will provide an expression for the limiting Laplace transform when an infinite--activity subordinator is used to model the request probabilities. As a consequence, we will be able to derive a general expression for the moments of any order of the search cost distribution. In Section 4 we will specify the moment formula when the request probabilities arise from normalization of a generalized gamma process or a two--parameter Poisson--Dirichlet process. As a byproduct, we will be able to recover results in \cite{kingman} and \cite{llp} as special cases. 

\section{The stationary search cost distribution}

In this Section we recall some preliminary definitions and results about the stationary search cost distribution. This part is also helpful since it sets up some notation that will be used henceforth. In order to describe the experiment that gives rise to the search cost distribution, 
consider a collection of items $I_1,I_2,\dots,I_n$ that are organized in a heap. For instance, books on a library shelf, files stored in a computer, etc. Suppose that the probability of requesting item $I_j$ is $p_j$, $j=1,\dots,n$. At each time, if 
$I_j$ is selected 
it is placed at the top of the heap. Successive requests are independent and, at each time, only one item may be removed from the heap. The underlying stochastic process is a Markov chain on permutations of the elements of the list and it is known as the \textit{Move-to-Front} rule. See, e.g., \cite{donnelly}.  The stationary distribution of this Markov chain is  
\[
P(\bm{\sigma}=\sigma)=p_{\sigma_1}\frac{p_{\sigma_2}}{1-p_{\sigma_1}}\frac{p_{\sigma_3}}{1-(p_{\sigma_1}+p_{\sigma_2})}\cdots\frac{p_{\sigma_n}}{1-(p_{\sigma_1}+\cdots+p_{\sigma_{n-1}})}
\]
where $\bm{\sigma}=(\sigma_1,\ldots,\sigma_n)$ is  a random permutation of $(1,\ldots,n)$. In general, it is assumed that the chain starts deterministically in permutation.  

The search cost is the position of the requested item in the heap or, equivalently, the number of items to be removed from the heap in order to find the requested one. In this setting, it might be of interest to determine the distribution of the search cost when the underlying
Markov chain is at equilibrium. In order to do so, we assume that the probabilities $p_1,\dots, p_n$ are random. 
In particular, if $w_1,\dots,w_n$ is a sequence of independent random variables, one can define
$$p_i=\frac{w_i}{\sum_{i=1}^n w_i},\quad i=1,\dots, n.$$
In \cite{kingman} the $p_i$'s are expressed as the normalized increments of a stochastic process
\begin{equation}
w_i=\xi_{t_i}-\xi_{t_{i-1}},\qquad i=1,\ldots,n
\label{eq:increments_subordinator}
\end{equation}
where $0=t_0<t_1<\cdots<t_n=1$ and $\xi=\{\xi_t$: $t\in[0,1]\}$ is a subordinator, i.e. a process with independent increments, almost surely increasing paths and such that $\P[\xi_0=0]=1$. In particular, it is assumed that $\xi$ is a \textit{Gamma} subordinator, i.e.
\begin{equation}\label{eq:lapl_w}
\phi_{i,n}(s)=\mathbb{E}\left[e^{-sw_i}\right]=
(1+s)^{-(t_i-t_{i-1})}\quad s\geq 0,
\end{equation}
or a \textit{$\gamma$-stable} subordinator for some $\gamma\in(0,1)$, i.e.
\begin{equation}\label{eq:lapl_w}
\phi_{i,n}(s)=\mathbb{E}\left[e^{-sw_i}\right]=e^{-(t_i-t_{i-1})s^{\gamma}}\quad s\geq 0.
\end{equation}
In both cases, an expression of the expected stationary search cost has been determined in \cite{kingman}, when the number of items $n$ is taken to diverge to $\infty$ in such a way that $\max_{1\le i\le n} (t_i-t_{i-1})\to 0$. Such a result has been extended by \cite{lp} to any subordinator and to the case where the list contains a finite number of items. In \cite{llp} the limiting behaviour of the moments of any order of the stationary search cost distribution is investigated with a $\gamma$--stable subordinator defining the weights $w_i$ as in \eqref{eq:increments_subordinator}. 
Usually, when move-to-front processes are considered, the moves are done at each discrete unit of time. 
Nonetheless, for illustrative purposes it may be convenient to consider a continuous time specification of the process. 
For instance, \cite{fh} considered the case where the moves are done at the time points of a Poisson process of intensity $1$ on $[0,\infty)$. This yields a \textit{continuized} Markov chain, which has the same stationary distributions as the one arising in the discrete move--to--the--front case.  
In this setting, let
$$\mathbb{I}_{ji}(t)=\left\lbrace\begin{array}{ll}
1&\mbox{ if item } I_j \mbox{ precedes item } I_i \mbox{ in the list at time } t\\
0&\mbox{ otherwise}\\
\end{array}
\right.$$
The \textit{search cost} 
for item $I_i$ at time $t$ is defined as
$$S_{n,i}(t)=\sum_{j\neq i}\mathbb{I}_{ji}(t).$$ 
Letting $R$ denote a random variable independent of the MtF Markov chain that identifies the label of the selected item, namely $P(R=i)=p_i$, then \textit{search cost} at time $t$ equals 
$S_n(t)=S_{n,R}(t).$
The random variable
$$
S_n\equiv S_n(\infty)=\lim_{t\rightarrow +\infty} S_{n}(t).
$$
is termed the \textit{stationary} search cost. It is worth noting that the stationary search cost can be seen as a size-biased pick from a size-biased permutation of a random discrete distribution on the positive integers minus one. To this end, the reader may refer to \cite{PPY}
for a thorough investigation on size-biased permutations of ranked jumps of a subordinator; see also \cite{LibroPitman}. More recently, \cite{PitmanTran} have investigated a finite dimensional analogue of the discrete random probability measure studied in \cite{PPY}.

In \cite{llp} one can find an explicit expression for the moments of any order $k\ge 1$ of $S_n$, when the request probabilities $p_i$ are the normalized increments of a $\gamma$--stable subordinator, as $n\to\infty$. In particular, one has

\begin{thm} {\rm (\cite{llp})}\label{TeoLLP}
If the $(p_1,\ldots,p_n)$ are determined by normalizing the increments of a $\gamma$-stable subordinator as in \eqref{eq:lapl_w}, with $t_i-t_{i-1}=1/n$ for each $i \in \{1, \ldots, n\}$, then
$$
\lim_{n \rightarrow \infty} \mathbb{E}(S_n^k) = \left\{ \begin{array}{ll}
\sum_{l=1}^k \frac{(l!)^2}{(\frac{1}{\gamma}-l-1)_{l}} a_l^{(k)}&\mbox{ if }\gamma<\frac{1}{k+1}\\ \infty &{\mbox{otherwise}}\\
\end{array}\right.
$$
where $a_l^{(k)}$, $l=1,\dots,k$, are the Stirling numbers of second kind. 
\end{thm}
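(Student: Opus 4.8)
The plan is to compute the descending factorial moments $\mathbb{E}\bigl[S_n(S_n-1)\cdots(S_n-l+1)\bigr]$ for each fixed $l$, pass to the limit, and then recombine them into ordinary moments through Stirling numbers. First I would use the continuized version of the chain recalled above: in equilibrium the requests to item $I_j$ form an independent Poisson process of rate $p_j$, so the time $T_j$ elapsed since the last request to $I_j$ is exponential of rate $p_j$, the $T_j$ are mutually independent, and $I_j$ precedes $I_i$ precisely when $T_j<T_i$. Hence, conditionally on the weights and on $T_i=t$, the partial cost $S_{n,i}=\sum_{j\neq i}\mathbb{1}(T_j<T_i)$ is a sum of independent Bernoulli variables with parameters $1-e^{-p_j t}$. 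Because the requested label $R$ is independent of the chain with $\mathbb{P}(R=i)=p_i$, and because the descending factorial moments of a sum of independent Bernoulli variables are $l!$ times the elementary symmetric functions of their parameters, I obtain
\[
\mathbb{E}\bigl[S_n(S_n-1)\cdots(S_n-l+1)\mid p_1,\dots,p_n\bigr]
= l!\sum_{i=1}^{n} p_i\int_0^\infty p_i\,e^{-p_i t}\,
e_l\bigl(\{1-e^{-p_j t}\}_{j\neq i}\bigr)\,\mathrm{d}t ,
\]
where $e_l$ is the $l$-th elementary symmetric polynomial. Converting powers into descending factorials by the Stirling identity then gives $\mathbb{E}(S_n^k)=\sum_{l=1}^k a_l^{(k)}\,\mathbb{E}\bigl[S_n(S_n-1)\cdots(S_n-l+1)\bigr]$, so it suffices to study each factorial moment separately.

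Next I would average over the $\gamma$-stable increments. Writing $p_i=w_i/T$ with $T=\sum_j w_j$ and rescaling $t=T\tau$ removes the normalization from every exponential, leaving a single factor $1/T$ which I linearize through $1/T=\int_0^\infty e^{-uT}\,\mathrm{d}u$. Expanding $e_l$ into a sum over $l$-subsets and using the mutual independence of the increments, the expectation factorizes over items: each subset-index contributes $\mathbb{E}[e^{-u w_j}(1-e^{-w_j\tau})]=e^{-u^\gamma/n}-e^{-(u+\tau)^\gamma/n}$, each remaining index contributes $\mathbb{E}[e^{-u w_j}]=e^{-u^\gamma/n}$, and the distinguished index contributes $\mathbb{E}[w_i^2 e^{-(u+\tau)w_i}]=\tfrac{\mathrm{d}^2}{\mathrm{d}s^2}e^{-s^\gamma/n}\big|_{s=u+\tau}$, where I have used the $\gamma$-stable Laplace transform $\mathbb{E}[e^{-sw_j}]=e^{-s^\gamma/n}$ from \eqref{eq:lapl_w} with $t_i-t_{i-1}=1/n$. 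Since the increments are identically distributed, all $n$ choices of the distinguished index and all $\binom{n-1}{l}$ choices of the subset contribute equally, so that
\[
\mathbb{E}\bigl[S_n(S_n-1)\cdots(S_n-l+1)\bigr]
= l!\,n\binom{n-1}{l}\int_0^\infty\!\!\int_0^\infty
\frac{\mathrm{d}^2}{\mathrm{d}s^2}e^{-s^\gamma/n}\Big|_{s=u+\tau}
\bigl(e^{-u^\gamma/n}-e^{-(u+\tau)^\gamma/n}\bigr)^{l}
\bigl(e^{-u^\gamma/n}\bigr)^{n-1-l}\mathrm{d}u\,\mathrm{d}\tau .
\]

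Now I would take $n\to\infty$. One has $\tfrac{\mathrm{d}^2}{\mathrm{d}s^2}e^{-s^\gamma/n}\sim \gamma(1-\gamma)s^{\gamma-2}/n$, each subset factor $\sim\bigl((u+\tau)^\gamma-u^\gamma\bigr)/n$, the product $\bigl(e^{-u^\gamma/n}\bigr)^{n-1-l}\to e^{-u^\gamma}$, while $l!\,n\binom{n-1}{l}\sim n^{l+1}$ cancels the $n^{-(l+1)}$ carried by the integrand, leaving
\[
\lim_{n\to\infty}\mathbb{E}\bigl[S_n(S_n-1)\cdots(S_n-l+1)\bigr]
= \int_0^\infty\!\!\int_0^\infty
\gamma(1-\gamma)(u+\tau)^{\gamma-2}\bigl((u+\tau)^\gamma-u^\gamma\bigr)^{l}\,
e^{-u^\gamma}\,\mathrm{d}u\,\mathrm{d}\tau .
\]
Substituting $v=u+\tau$ and then $r=v^\gamma$, $y=u^\gamma$ reduces the inner integral to $\int_1^\infty s^{-1/\gamma}(s-1)^l\,\mathrm{d}s=\Gamma(1/\gamma-l-1)\,l!/\Gamma(1/\gamma)$, a Beta integral that is finite exactly when $1/\gamma-l-1>0$, i.e. $\gamma<1/(l+1)$; the outer integral then equals $l!/\gamma$, and simplifying the Gamma factors via $\Gamma(1/\gamma)=\tfrac{1-\gamma}{\gamma}\Gamma(1/\gamma-1)$ yields $(l!)^2/(1/\gamma-l-1)_l$, where $(x)_l=\Gamma(x+l)/\Gamma(x)$ is the ascending factorial as in the statement. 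Summing against the Stirling weights over $1\le l\le k$ gives the claimed value; the binding convergence condition is the one at $l=k$, namely $\gamma<1/(k+1)$, and when it fails the corresponding Beta integral diverges, so the $k$-th moment is infinite.

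The main obstacle is to make this passage to the limit rigorous. One must justify interchanging the limit with the double integral and the finite sums, controlling uniformly in $n$ both the factor $\bigl(e^{-u^\gamma/n}\bigr)^{n-1-l}$ (whose exponent converges to $u^\gamma$ only after accounting for the $n-1-l$ terms) and the tail in $\tau$, and showing that the subdominant $O(n^{-2})$ piece of the second derivative together with the higher-order corrections to the subset factors disappear in the limit. A dominated-convergence argument with an explicit integrable envelope, combined with the sharp integrability threshold $\gamma<1/(l+1)$ read off from the Beta integral, is what closes the proof and simultaneously produces the dichotomy in the statement.
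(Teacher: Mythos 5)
Your proposal is correct in substance, and it takes a genuinely different route from the paper. The paper never proves Theorem \ref{TeoLLP} from first principles: it imports the Laplace-transform formula \eqref{eqn:lt} of \cite{bp} and the factorial-moment formula \eqref{allmoments} of \cite{llp}, proves the general subordinator result (Theorem \ref{TeoCRM}) by letting $n\to\infty$ in \eqref{eqn:lt}, identifies the limit $S$ as a mixture of Poisson laws with mixing density $f(x,y)=-\psi''(x+y)\edr^{-\psi(x)}$, secures moment convergence via the uniform-integrability condition $\sup_n I_n(l)<\infty$ and the Corollary to Theorem 25.12 of \cite{Billy}, and finally recovers Theorem \ref{TeoLLP} by setting $u=0$ in \eqref{GGPSI}. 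You instead rederive the pre-limit factorial moments from scratch: the exponential-clock representation of the continuized chain, the identity $\E[X(X-1)\cdots(X-l+1)]=l!\,e_l(q_1,\dots,q_m)$ for independent Bernoulli sums, and the kernel $1/T=\int_0^\infty\edr^{-uT}\,\ddr u$ give a double-integral formula that is exactly the paper's $M_{l,n}(0)$ in \eqref{allmoments}: an integration by parts in $\tau$ converts your kernel $\phi_n''(u+\tau)\,(\phi_n(u)-\phi_n(u+\tau))^{l}$ with prefactor $n(n-1)\cdots(n-l)$ into the paper's $l\,[\phi_n'(r+t)]^2(\phi_n(r)-\phi_n(r+t))^{l-1}$ with prefactor $l\cdot n(n-1)\cdots(n-l)$. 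Your limiting integral is precisely $\Psi(l)$ in \eqref{Momenti} with $\psi(s)=s^\gamma$, and your Beta-integral evaluation (including the threshold $\gamma<1/(l+1)$ and the reduction $\Gamma(1/\gamma)=\tfrac{1-\gamma}{\gamma}\Gamma(1/\gamma-1)$) agrees with \eqref{GGPSI} at $u=0$. What your route buys is a self-contained, purely probabilistic proof for the stable case that avoids weak convergence altogether; what the paper's route buys is generality (any subordinator, hence the generalized gamma and Pitman--Yor corollaries at no extra cost) together with the Poisson-mixture interpretation of the limit law.

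The one step you leave unexecuted is exactly where the paper concentrates its technical effort. You acknowledge that interchanging $n\to\infty$ with the double integral requires a dominated-convergence envelope, but you do not construct one; the paper's device is the elementary bound $1-\edr^{-x}\le x$, which yields $M_{l,n}(0)\le l\,\frac{(n-1)\cdots(n-l)}{n^l}\,I_n(l)$ and reduces everything to $\sup_n I_n(l)<\infty$ --- in the stable case this supremum is controlled by the same Beta-type integral you compute, so the envelope exists exactly when $\gamma<1/(l+1)$. Separately, for the ``$\infty$ otherwise'' branch you argue only that the limiting Beta integral diverges; since you never establish weak convergence of $S_n$, you should close this by Fatou's lemma applied to your pre-limit integrals: the integrands are nonnegative ($\phi_n''\ge 0$ by complete monotonicity) and converge pointwise, so $\liminf_n \E[S_n(S_n-1)\cdots(S_n-l+1)]$ dominates the divergent limiting integral, and positivity of the Stirling weights transfers the divergence to $\E[S_n^k]$. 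With these two additions your argument is complete and fully independent of \cite{bp} and \cite{llp}.
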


The main tool for achieving the previous result is the Laplace transform of $S_n$, which can be determined according to the following 

\begin{thm} {\rm (\cite{bp})}
If $\{w_i: i=1,\ldots,n\}$ are non--negative independent random variables and $p_i=w_i/\sum_{j=1}^n w_j$ for each $i=1,\ldots,n$, then
\begin{equation}\label{eqn:lt}
\phi_{S_n}(s) = \sum_{i=1}^n \int_0^\infty \left( \int_t^\infty \phi_{i,n}''(r) \prod_{j \neq i}h_{t,s,j,n}(r) \,\ddr r \right) \,\ddr t \;,
\end{equation}
for all $s \geqslant 0$, where $\phi_{j,n}(s)=\E[\edr^{-s w_j}]$ and 
$$
h_{t,s,j,n}(r) = \phi_{j,n}(r)+\edr^{-s} (\phi_{j,n}(r-t)-\phi_{j,n}(r)) \;, \quad t \geqslant 0, r \geqslant t \;.
$$
\end{thm}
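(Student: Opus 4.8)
The plan is to work with the continuized version of the chain described above, for which the equilibrium configuration admits a transparent description in terms of independent exponential ``ages''. Run the chain to stationarity and, for each item $I_k$, let $A_k$ be the elapsed time since its most recent request. In the continuized chain the requests of $I_k$ form, conditionally on the weights, an independent Poisson process of rate $p_k$, so the backward recurrence times $A_k$ are independent with $A_k\sim\text{Exp}(p_k)$, and $I_j$ precedes $I_i$ in the list precisely when $A_j<A_i$, i.e. $\mathbb{I}_{ji}(\infty)=\one(A_j<A_i)$. Since the label $R$ is drawn independently with $\P(R=i)=p_i$, the stationary search cost satisfies $S_n=\sum_{j\neq R}\one(A_j<A_R)$. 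First I would record this representation carefully, since it converts the combinatorial quantity $S_n$ into a comparison of independent exponential clocks.

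Next, conditionally on $\bm{w}=(w_1,\dots,w_n)$, I would compute the Laplace transform by conditioning on $\{R=i\}$ and then on the age $A_i=a$ of the selected item. Writing $\edr^{-s\one(A_j<A_i)}=\one(A_j\ge a)+\edr^{-s}\one(A_j<a)$ and using independence of the ages, the inner expectation factorizes over $j\neq i$, yielding
\[
\E[\edr^{-sS_n}\mid \bm{w}]=\sum_{i=1}^n p_i\int_0^\infty p_i\,\edr^{-p_i a}\prod_{j\neq i}\bigl(\edr^{-p_j a}+\edr^{-s}(1-\edr^{-p_j a})\bigr)\,\ddr a .
\]
Each factor equals $\edr^{-s}+(1-\edr^{-s})\edr^{-p_ja}\in[\edr^{-s},1]$ and $p_i^2\edr^{-p_ia}$ is integrable in $a$, so Tonelli's theorem will license every interchange that follows.

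The third step removes the normalization. The substitution $a=Wr$, with $W=\sum_k w_k$, linearizes the exponents since $\edr^{-p_k a}=\edr^{-w_k r}$, and turns $p_i^2\,\ddr a$ into $(w_i^2/W)\,\ddr r$. The residual $1/W$ is then disposed of through the elementary identity $1/W=\int_0^\infty \edr^{-tW}\,\ddr t$, after which $\edr^{-tW}=\prod_k \edr^{-tw_k}$ distributes across the product and the selected factor. Taking the expectation over the independent weights then factorizes: the $i$-th factor produces $\E[w_i^2\edr^{-w_i(r+t)}]=\phi_{i,n}''(r+t)$, and each remaining factor produces $\phi_{j,n}(r+t)+\edr^{-s}\bigl(\phi_{j,n}(t)-\phi_{j,n}(r+t)\bigr)$. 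A change of variables $\rho=r+t$ (so that $\rho>t$), followed by the inner substitution $t\mapsto\rho-t$ on $(0,\rho)$, sends the middle argument $\phi_{j,n}(t)$ into $\phi_{j,n}(\rho-t)$, reproducing exactly $h_{t,s,j,n}$ and the claimed double integral.

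I expect the main obstacle to be the first step rather than the subsequent algebra: making the stationary-age representation rigorous, namely that at equilibrium the heap order is governed by independent exponential backward recurrence times and that the independently sampled label $R$ can be inserted without disturbing this independence (so that $A_R\sim\text{Exp}(p_R)$ conditionally on $R$). Once $S_n$ is written through independent exponentials, the remainder is bookkeeping, with Tonelli — justified by the uniform bound $0\le h_{t,s,j,n}\le 1$ on the factors — handling each interchange and the gamma identity absorbing the normalizing constant $W$.
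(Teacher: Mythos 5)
Your proof is correct. Note that the paper itself states this theorem without proof, as a quotation of Barrera and Paroissin (2004); your argument — the Fill--Holst stationary representation via independent exponential ages $A_k\sim\text{Exp}(p_k)$ from Poisson thinning, followed by the substitution $a=Wr$, the identity $1/W=\int_0^\infty \edr^{-tW}\,\ddr t$ to decouple the normalization, factorization of the expectation over independent weights, and the final Tonelli/change-of-variable step turning $\phi_{j,n}(t)$ into $\phi_{j,n}(\rho-t)$ — is essentially the original derivation in that cited source, and all interchanges are legitimately covered by non-negativity of the integrands.
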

In the next section we will tackle the problem of the determination of the moments by considering a more general setting. 

\section{Subordinators--based search cost distribution}

Suppose $\xi=\{\xi_t:\: t\in [0,1]\}$ is a stochastic process defined on some probability space $(\Omega,\Fcr,\P)$ such that
\begin{itemize}\addtolength{\itemsep}{-0.4\baselineskip}
\item[(i)] for any $n\ge 1$ and $0=t_0<t_1\,\cdots\,<t_n\le 1$, the random variables $\xi_{t_i}-\xi_{t_{i-1}}$ are independent ($i=1,\ldots,n$);
\item[(ii)] $\P[ \xi_0=0]=1$ and $\xi_t-\xi_s\eqd\xi_{t-s}$ for any $1\ge t\ge s\ge 0$;
\item[(iii)] $t\mapsto\xi_t$ is right--continuous and non--decreasing, with $\P$--probability 1;
\end{itemize}
Henceforth $\xi$ is termed \textit{subordinator} and there exists a measure $\nu$ on $\R^+$ such that $\int_0^\infty \min\{1,y\}\,\nu(\ddr y)<\infty$ and 
\begin{equation}
  \label{eq:lapl_crm}
  \psi(s):=-\frac{1}{t}\log\Big(\E[\edr^{-s \xi_t}]\Big)=\int_0^\infty\left[1-\edr^{-sy}\right]\,
\nu(\ddr y)
\end{equation}
for any $s\ge 0$. The measure $\nu$ is often referred to as the \textit{L\'evy measure} of $\xi$, whereas $\psi$ is the so--called \textit{Laplace exponent} of $\xi$. Noteworthy examples the Gamma process, which is identified by
\begin{equation}
\nu(\ddr y)=y^{-1}\:\edr^{-y}\:\ddr y,
\end{equation}
and the $\gamma$--stable process, with $\gamma\in(0,1)$, whose L\'evy measure is
\begin{equation}
  \label{eq:gamma_st_levy}
  \nu(\ddr y)=\frac{\gamma}{\Gamma(1-\gamma)}\: y^{-1-\gamma}\:\ddr y.
\end{equation}
It is obvious that $\psi$ fully characterizes $\xi$. Hence, in order to identify the cost search distribution, with request probabilities obtained as transformations of increments of subordinators, we can target the determination of the Laplace transform as a function of $\psi$. Indeed, we provide a closed form expression for
$$\phi_{S}(s)=\lim_{n\rightarrow\infty} \phi_{S_n}(s)$$ 
and for the moments $\E[S_n^k]$ as $n\to\infty$, for any $k\ge 1$, in terms of the Laplace exponent $\psi$ of the underlying subordinator $\xi$. In order to state the main result of the paper, it is convenient to introduce the following quantity, 
\begin{equation}\label{UI2}
I_n(l)=\int_0^{\infty}\edr^{-\frac{n-2}{n}\psi(r)}\int_0^{\infty}[\psi'(r+t)]^2\edr^{-\frac{2}{n}\psi(r+t)}[\psi(r+t)-\psi(r)]^{l-1}\ddr t\ddr r.
\end{equation}
for every $l=1,\dots,k$.
\begin{thm}\label{TeoCRM}
 If the $(p_1,\ldots,p_n)$ are determined by normalizing the increments of a subordinator in \eqref{eq:increments_subordinator}, with $t_i-t_{i-1}=1/n$ 
for each $i \in \{1, \ldots, n\}$, then
\begin{equation}\label{FirstContri}
\begin{split}
\phi_S(s)
&=-\int_0^{\infty}\int_{0}^{\infty} \psi''(x+y)\edr^{-\psi(x+y)} \edr^{-\edr^{-s}[\psi(x)-\psi(x+y)]}\,\ddr x\,\ddr y.
\end{split}
\end{equation}
For any positive integer $k$ such that  $\sup_{n} I_n(l)<\infty$, $l=1,\dots,k$, one has 
\begin{equation}\label{MomentGen}
\lim_{n \rightarrow \infty} \mathbb{E}[S_n^k] =\sum_{l=1}^k  a_l^{(k)}\Psi(l),
\end{equation}
where $a_l^{(k)}$, $l=1,\dots,k$, are the Stirling numbers of second kind and
\begin{equation}\label{Momenti}
\Psi(l)=-\int_0^{\infty}\int_{0}^{\infty} [\psi(x+y)-\psi(x)]^l \psi''(x+y)\,\edr^{-\psi(x)}\:
\ddr x\, \ddr y.
\end{equation}
\end{thm}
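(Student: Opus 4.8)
The plan is to start from the exact finite-$n$ Laplace transform in \eqref{eqn:lt} and to exploit the fact that, when $t_i-t_{i-1}=1/n$ for every $i$, the increments $w_i$ are identically distributed. Writing $\phi_n(r)=\E[\edr^{-rw_1}]=\edr^{-\psi(r)/n}$, both $\phi_{i,n}$ and $h_{t,s,j,n}$ become independent of the indices $i,j$, so the sum over $i$ contributes a factor $n$ and the product over $j\neq i$ collapses to an $(n-1)$-th power. This yields $\phi_{S_n}(s)=n\int_0^\infty\int_t^\infty \phi_n''(r)\,[h_{t,s,n}(r)]^{n-1}\,\ddr r\,\ddr t$ with $h_{t,s,n}(r)=\phi_n(r)+\edr^{-s}(\phi_n(r-t)-\phi_n(r))$.

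For the Laplace transform I would differentiate twice, $\phi_n''(r)=\big(-\tfrac1n\psi''(r)+\tfrac1{n^2}[\psi'(r)]^2\big)\edr^{-\psi(r)/n}$, so that $n\phi_n''(r)\to-\psi''(r)$ pointwise, the square-bracket piece being of lower order. A first-order expansion $\phi_n(u)=1-\psi(u)/n+o(1/n)$ gives $h_{t,s,n}(r)=1-\tfrac1n\big[\psi(r)-\edr^{-s}(\psi(r)-\psi(r-t))\big]+o(1/n)$, whence $[h_{t,s,n}(r)]^{n-1}\to\exp\{-\psi(r)+\edr^{-s}(\psi(r)-\psi(r-t))\}$. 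After justifying the interchange of limit and integral by dominated convergence, the change of variables $x=r-t$, $y=t$ transforms $n\int\int \phi_n''(r)[h_{t,s,n}(r)]^{n-1}$ into the stated expression \eqref{FirstContri}.

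For the moments the crucial observation is that $h_{t,s,n}(r)$ is affine in $z:=\edr^{-s}$, so the probability generating function $G_n(z):=\phi_{S_n}(-\log z)$ is obtained by replacing $\edr^{-s}$ with $z$, and differentiating the $(n-1)$-th power $l$ times is explicit. Evaluating at $z=1$, where $h=\phi_n(r-t)$, produces the $l$-th factorial moment $\E[(S_n)_l]=G_n^{(l)}(1)=n(n-1)\cdots(n-l)\int_0^\infty\int_t^\infty \phi_n''(r)[\phi_n(r-t)]^{n-1-l}(\phi_n(r-t)-\phi_n(r))^l\,\ddr r\,\ddr t$. The identity $S_n^k=\sum_{l=1}^k a_l^{(k)}(S_n)_l$, which relates ordinary to factorial moments through the Stirling numbers of the second kind, then reduces \eqref{MomentGen} to proving $\lim_n G_n^{(l)}(1)=\Psi(l)$ for each $l\le k$.

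Finally I would rerun the same scaling: the prefactor grows like $n^{l+1}$, $\phi_n''(r)$ decays like $n^{-1}$, and $(\phi_n(r-t)-\phi_n(r))^l$ decays like $n^{-l}$ since $n(\phi_n(r-t)-\phi_n(r))\to \psi(r)-\psi(r-t)$; these balance to an $O(1)$ integrand that converges pointwise, after the substitution $x=r-t$, $y=t$, to $-\psi''(x+y)\edr^{-\psi(x)}(\psi(x+y)-\psi(x))^l$, whose double integral is exactly $\Psi(l)$ in \eqref{Momenti}. The main obstacle is precisely the justification of this passage to the limit under the integral sign: the decomposition of $\phi_n''$ yields a leading $\psi''$-term, which produces $\Psi(l)$, and a remainder governed by the $n^{-2}[\psi']^2$ term, and it is here that the hypothesis $\sup_n I_n(l)<\infty$ from \eqref{UI2} supplies the uniform integrability required to apply dominated convergence and to confirm that the $[\psi']^2$ contribution is asymptotically negligible. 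Establishing this domination, rather than the routine pointwise limit or the purely algebraic Stirling-number bookkeeping, is the technical heart of the argument.
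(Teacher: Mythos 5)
Your treatment of the Laplace transform \eqref{FirstContri} is essentially the paper's own proof: exchangeability of the increments reduces \eqref{eqn:lt} to $n\int_0^\infty\int_t^\infty \phi_n''(r)\,[h_{t,s,n}(r)]^{n-1}\,\ddr r\,\ddr t$, then $n\phi_n''(r)\to-\psi''(r)$, $[h_{t,s,n}(r)]^{n-1}\to \edr^{-\psi(r)}\edr^{-\edr^{-s}[\psi(r-t)-\psi(r)]}$, and the substitution $x=r-t$ gives \eqref{FirstContri}; both you and the paper treat the interchange of limit and integral here at the same (informal) level, so this part is fine.

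The moment part is where you genuinely diverge from the paper, and where your plan has a gap. The paper never passes a limit inside a moment integral: it observes that \eqref{FirstContri} exhibits $S$ as a mixture of Poisson laws with parameter $\psi(x+y)-\psi(x)$ and mixing density $f(x,y)=-\psi''(x+y)\edr^{-\psi(x)}$, so $\E[S^k]=\sum_l a_l^{(k)}\Psi(l)$ follows at once from the Poisson moment formula; it then upgrades $S_n\Rightarrow S$ to moment convergence via the corollary of Theorem 25.12 in Billingsley, for which it only needs the \emph{numerical} bound $\sup_n\E[S_n^k]<\infty$, obtained from the llp formula $\E[S_n^k]=\sum_l a_l^{(k)}M_{l,n}(0)$ and the inequality $1-\edr^{-x}\le x$, which yields $M_{l,n}(0)\le l\,I_n(l)$. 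That is the only role of the hypothesis $\sup_n I_n(l)<\infty$. You instead propose to prove $\lim_n \E[(S_n)_l]=\Psi(l)$ by dominated convergence in the explicit factorial-moment integral, asserting that $\sup_n I_n(l)<\infty$ ``supplies the uniform integrability required.'' It does not: boundedness of a sequence of integrals of $n$-dependent integrands furnishes neither a dominating function nor uniform integrability of those integrands over $(0,\infty)^2$; since the integrands are nonnegative, it combines with pointwise convergence to give only the Fatou inequality $\liminf_n\E[(S_n)_l]\ge\Psi(l)$, not equality. There is also a concrete mismatch of orders: in your form of $G_n^{(l)}(1)$ (which carries $\phi_n''$ and the $l$-th power, i.e.\ the version of $M_{l,n}(0)$ \emph{before} integration by parts), the remainder coming from the $n^{-2}[\psi']^2$ part of $\phi_n''$ involves $[\psi'(x+y)]^2[\psi(x+y)-\psi(x)]^{l}$, i.e.\ an integral of type $I_n(l+1)$ scaled by $n^{-1}$; for the top term $l=k$ its negligibility would require control of $I_n(k+1)$, which the hypothesis (stated only for $l\le k$) does not provide. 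The paper's route avoids both issues precisely because the $[\psi']^2$ form already appears in $M_{l,n}(0)$ and because it needs boundedness, not convergence, of these integrals. To close your argument you would need either additional integrability assumptions (an honest dominating function, or the hypothesis pushed to $l=k+1$ together with an integrable bound for the $\psi''$ piece), or to adopt the paper's Poisson-mixture identification plus the Billingsley uniform-integrability step.
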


\begin{proof} If $t_i-t_{i-1}=1/n$, for every $i=1,\dots,n$, then one obviously has $\phi_{i,n}(r)=\phi_n(r)=\edr^{-\psi(r)/n}$. If one, now, considers the expression of the Laplace transform in \eqref{eqn:lt}, it is apparent that $h_{t,s,i,n}=h_{t,s,n}$, for any $i=1,\ldots,n$ and 
\[
\phi_{S}(s)=\lim_{n\rightarrow\infty}\phi_{S_n}(s)=\lim_{n\to\infty}
n \,\int_0^\infty \int_{t}^\infty \phi_n ''(r)  h^{n-1}_{t,s,n}(r) \ddr r\, \ddr t
\]
As one trivially has
$$
\phi_n''(r)=\left[\frac{(\psi'(r))^2}{n^2}-\frac{\psi''(r)}{n}\right]\phi_n(r),
$$
then \ $\lim_{n\to\infty} n\phi_n''(r)= -\psi''(r)$. 
On the other hand, 
\[
h^{n-1}_{t,s,n}(r) =
\phi_n(r)^{n-1} \left[ 1+\edr^{-s}\left(\frac{\phi_n(r-t)}{\phi_n(r)}-1\right) \right]^{n-1}
\]
Since $\{\phi_n(r)\}^{n-1} \to \edr^{-\psi(r)}$
and 
$$
\frac{\phi_n(r-t)}{\phi_n(r)}-1=\edr^{-(\psi(r-t)-\psi(r))/n}-1\sim -\frac{\psi(r-t)-\psi(r)}{n}+o\left(\frac{1}{n}\right) $$
as $n\to\infty$, 
we conclude that
$$
\lim_{n\rightarrow\infty} h^{n-1}_{t,s,n}(r)=\edr^{-\psi(r)} \,\edr^{-\edr^{-s}[\psi(r-t)-\psi(r)]}.
$$
Then
\[ 
\phi_S(s)=-\int_0^{\infty}\int_{t}^{\infty} \psi''(r)\edr^{-\psi(r)} \edr^{-\edr^{-s}[\psi(r-t)-\psi(r)]}\:\ddr r\,\ddr t
\] 
and the conclusion follows from the simple change of variable $x=r-t$. So far, we have proved the weak convergence of $S_n$ to $S$. As far as the determination of the moments of the limiting variable $S$ is concerned, note that the Laplace transform $\phi_S(s)$ can be rewritten as 
$$\phi_S(s)=\int_0^{\infty}\int_{0}^{\infty}f(x,y) \edr^{-(1-\edr^{-s})[\psi(x+y)-\psi(x)]}\,\ddr x\,\ddr y,$$
where
$$f(x,y)=-\psi''(x+y)\edr^{-\psi(x)}.$$
Hence, $S$ is equal in distribution to a mixture of Poisson distributions with parameter $\psi(x+y)-\psi(x)$ where $f(x,y)$ is the mixing law. In view of this representation, one can determine the moments of $S$ as a mixture of moments of the underlying Poisson distributions. It is well known that the $k$-th moment of a Poisson distribution with parameter $\lambda$ is $\sum_{l=1}^k a_l^{(k)}\lambda^{l}$. This immediately yields that the $k$-th moment of $S$ coincides with the right hand side of equation \eqref{MomentGen}. According to the Corollary of Theorem 25.12 in \cite{Billy}, in order to establish the equality in equation \eqref{MomentGen}, we need to prove that 
$\sup_n \mathbb{E} [S_n^k]<\infty.$ Following \cite{llp}, one has 
\begin{equation}\label{allmoments}
\mathbb{E}(S_n^k)=\sum_{l=1}^k a_l^{(k)}M_{l,n}(0),
\end{equation}
where 
\begin{align*}
M_{l,n}(0)&=l\cdot n(n-1)\cdots(n-l)
\int_0^{\infty}[\phi_n(r)]^{n-l-1}\int_0^{\infty}\phi'_n(r+t)\,
\phi'_{n}(r+t)\\
&\quad\qquad \times [\phi_{n}(r)-\phi_{n}(r+t))]^{l-1}\ddr t\ddr r.
\end{align*}
The above equation suggests that it is enough to prove that $\sup_{n}M_{l,n}(0)<\infty$, for every $l=1,\dots,k$. Hence, 
\begin{align*}
M_{l,n}(0)&=l\frac{(n-1)\cdots(n-l)}{n}
\int_0^{\infty}\edr^{-\frac{n-l-1}{n}\psi(r)}\int_0^{\infty}[\psi'(r+t)]^2\edr^{-\frac{2}{n}\psi(r+t)}\\
&\quad\qquad \times [\edr^{-\psi(r)/n}-\edr^{-\psi(r+t)/n}]^{l-1}\ddr t\ddr r\\
&=l\frac{(n-1)\cdots(n-l)}{n}
\int_0^{\infty}\edr^{-\frac{n-2}{n}\psi(r)}\int_0^{\infty}[\psi'(r+t)]^2\edr^{-\frac{2}{n}\psi(r+t)}\\
&\quad\qquad \times [1-\edr^{-[\psi(r+t)-\psi(r)]/n}]^{l-1}\ddr t\ddr r
\end{align*}
By using the well known inequality $1-\edr^{-x}\leq x$ we get
\begin{align*}
M_{l,n}(0)
&\leq l\frac{(n-1)\cdots(n-l)}{n^l} I_n(l),
\end{align*}
and hence, the convergence in equation \eqref{MomentGen} follows from the assumption that $\sup_{n} I_n(l)<\infty$. 
\end{proof}

In the next Section we will use the previous result to derive the expression of the limiting moments when the $p_i$'s are obtained by means of a normalized generalized gamma process or of a two--parameter Poisson--Dirichlet process. 

\section{The generalized gamma process and the Two-parameter Poisson-Dirichlet process}

The generalized gamma process has been introduced in \cite{Brix} for constructing shot noise Cox processes. It is characterized by the following L\'evy measure 
\begin{equation}\label{GenGamma}
\nu(dy)=
\Gamma(1-\gamma)^{-1} y^{-(1+\gamma)}e^{-uy}\ddr y,
\end{equation}
where $\gamma\in (0,1)$ and $u\geq 0$. It turns out that the Laplace exponent, evaluated at any $s\ge 0$, is 
$$\psi(s)=\frac{(u+s)^{\gamma}-u^{\gamma}}{\gamma}.$$ 
In this case $\bar{\xi}=\{\xi_t/\xi_1:\: t\in[0,1]\}$ identifies the normalized generalized gamma process and it will be denoted with the notation NGG($\gamma,u$). This random probability measure has been used for density estimation in Bayesian nonparametric mixture models and, when its distribution is the directing measure of a sequence of exchangeable random elements, the associated predictive distributions can be determined in closed form. See \cite{lmp}. It is also worth stressing that it includes as special cases both the normalized $\gamma$--stable ($u=0$) and the Dirichlet processes ($\gamma\to 0$). Furthermore, mixtures of normalized generalized gamma processes induce a two--parameter Poisson--Dirichlet process. Specifically, let $Z$ be a random variable with density function
$$
f_Z(z)=\frac{1}{\Gamma(\theta/\gamma)}\: z^{(\theta/\gamma)-1}\,\edr^{-z},
$$
for any $\theta>0$ and $\gamma\in(0,1)$, i.e. $Z$ is a Gamma random variable with parameters $(\theta/\gamma)$ and $1$. If $\bar{\xi}$ is a NGG($\gamma,1$) independent from $Z$, from Proposition 21 in \cite{PitmanYor}, one has that the normalized process $\xi_{Zt}/\xi_Z$ has the same distribution as a Poisson--Dirichlet process with parameters $(\gamma,\theta)$. Hence, one can define weights
\begin{equation}\label{2Pincre}
w_i=\xi_{Z t_i}-\xi_{Z t_{i-1}}\qquad i=1,\ldots,n
\end{equation}
which, in turn, entail
\[
\phi_{i,z}(s)=\E\Big[\edr^{-s w_i}\,\Big|\,z\Big]=\edr^{-z(t_i-t_{i-1})[(1+s)^\gamma-1]}.
\]
As we will see in the proof of Theorem \ref{MomentsGG},  marginalization with respect to $Z$ allows to derive the monents of the stationary search cost distribution when the request probabilities come from a two--parameter Poisson--Dirichlet process. 
\begin{thm}\label{MomentsGG}
If the $(p_1,\ldots,p_n)$ are determined by normalizing the increments of a generalized gamma subordinator with $t_i-t_{i-1}=1/n$ 
for each $i \in \{1, \ldots, n\}$, then
\begin{equation}\label{GGPSI}
\Psi(l)=\left\{ \begin{array}{ll}
\frac{(l!)^2}{((1/\gamma)-l-1)_{l}}\sum_{m=0}^l \frac{u^{m\gamma}}{m!\gamma^m}&\mbox{ if }\gamma<\frac{1}{k+1}\\ \infty &{\mbox{otherwise}}\\
\end{array}\right.
\end{equation}
If $(p_1,\ldots,p_n)$ are determined by the increments $w_i$'s in equation \eqref{2Pincre} then
\begin{equation}\label{PYPSI}
\Psi(l)=\left\{ \begin{array}{ll}
l!\frac{(\theta/\gamma+1)_l}{(\frac{1}{\gamma}-l-1)_{l}}&\mbox{ if }\gamma<\frac{1}{k+1}\\ \infty &{\mbox{otherwise}}\\
\end{array}\right.
\end{equation}
\end{thm}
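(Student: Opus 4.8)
The plan is to evaluate the double integral \eqref{Momenti} of Theorem~\ref{TeoCRM} for the two explicit Laplace exponents at hand. I would treat the normalized generalized gamma process directly, and then reduce the two--parameter Poisson--Dirichlet case to it by conditioning on the auxiliary gamma random variable $Z$ introduced above.

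For the first claim I substitute $\psi(s)=\gamma^{-1}[(u+s)^{\gamma}-u^{\gamma}]$ and $\psi''(s)=(\gamma-1)(u+s)^{\gamma-2}$ into \eqref{Momenti}. The natural first move is the unit--Jacobian change of variables $a=u+x$, $b=u+x+y$, which maps $\{x,y\ge 0\}$ onto $\{u\le a\le b\}$ and gives
$$\Psi(l)=(1-\gamma)\,\edr^{u^{\gamma}/\gamma}\,\gamma^{-l}\int_u^{\infty}\!\!\int_a^{\infty}(b^{\gamma}-a^{\gamma})^{l}\,b^{\gamma-2}\,\edr^{-a^{\gamma}/\gamma}\,\ddr b\,\ddr a.$$
Expanding $(b^{\gamma}-a^{\gamma})^{l}=\sum_{j=0}^{l}\binom{l}{j}(-1)^{l-j}a^{\gamma(l-j)}b^{\gamma j}$ and carrying out the inner integral yields $\int_a^{\infty}b^{\gamma(j+1)-2}\,\ddr b=a^{\gamma(j+1)-1}/(1-\gamma(j+1))$, which is finite exactly when $\gamma<1/(j+1)$; the binding term is $j=l$, so the whole expression is finite iff $\gamma<1/(l+1)$, and imposing this for every $l\le k$ produces the stated threshold $\gamma<1/(k+1)$ (and $\Psi(l)=\infty$ beyond it). The crucial simplification is that the exponent of $a$ collapses to $\gamma(l-j)+\gamma(j+1)-1=\gamma(l+1)-1$, independently of $j$, so the surviving $a$--integral factors out of the $j$--sum. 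Evaluating it by the substitution $w=a^{\gamma}$ turns it into an upper incomplete gamma integral equal to $\gamma^{l}\,l!\,\edr^{-u^{\gamma}/\gamma}\sum_{m=0}^{l}u^{m\gamma}/(m!\,\gamma^{m})$.

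What then remains is the coefficient $C_l=\sum_{j=0}^{l}\binom{l}{j}(-1)^{l-j}/(1-\gamma(j+1))$. Writing $1-\gamma(j+1)=-\gamma(j+x)$ with $x=1-1/\gamma$ and invoking the beta--integral identity $\sum_{j=0}^{l}\binom{l}{j}(-1)^{j}/(x+j)=l!/\big(x(x+1)\cdots(x+l)\big)$, one telescopes the Pochhammer product to obtain $(1-\gamma)C_l=l!/(1/\gamma-l-1)_l$. Multiplying the three pieces and cancelling the factors $\edr^{\pm u^{\gamma}/\gamma}$ and $\gamma^{\pm l}$ delivers exactly \eqref{GGPSI}; setting $u=0$ kills every $m\ge 1$ term and recovers Theorem~\ref{TeoLLP}.

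For the two--parameter Poisson--Dirichlet process I exploit the mixture representation recorded before the statement: conditionally on $Z=z$ the weights in \eqref{2Pincre} are increments of a subordinator with Laplace exponent $\psi_z(r)=z[(1+r)^{\gamma}-1]$ and $t_i-t_{i-1}=1/n$, so Theorem~\ref{TeoCRM} applies conditionally. Running the computation above verbatim with $\psi_z$ (the time change merely rescales, and all powers of $z$ and all exponentials cancel as before) gives the conditional quantity $\Psi_z(l)=\big[(l!)^2/(1/\gamma-l-1)_l\big]\sum_{m=0}^{l}z^{m}/m!$. Integrating over $Z\sim\mathrm{Gamma}(\theta/\gamma,1)$, using $\E[Z^{m}]=(\theta/\gamma)_m$ and the hockey--stick identity $\sum_{m=0}^{l}(\theta/\gamma)_m/m!=(\theta/\gamma+1)_l/l!$ (an immediate induction via Pascal's rule), collapses the sum and yields \eqref{PYPSI}. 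The main difficulty here is analytic rather than algebraic: since the marginal weights are dependent through $Z$, Theorem~\ref{TeoCRM} is only available after conditioning, and one must justify exchanging $\lim_n$ with $\E_Z$. I would control this through the bound $\E[S_n^{k}\mid z]\le\sum_l a_l^{(k)}\,C\,\sup_n I_n^{(z)}(l)$ extracted from the proof of Theorem~\ref{TeoCRM} (where $I_n^{(z)}$ denotes \eqref{UI2} formed from $\psi_z$), check that $\sup_n I_n^{(z)}(l)$ is dominated by a polynomial in $z$ that is integrable against the gamma density, and conclude by dominated convergence; as in the first part, finiteness throughout rests on $\gamma<1/(k+1)$.
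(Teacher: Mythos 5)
Your proof is correct and follows essentially the same route as the paper's: substitute the generalized gamma Laplace exponent into \eqref{Momenti}, change variables, expand the binomial so the $a$--integral collapses to an incomplete gamma function, identify the alternating sum $C_l$ with the Beta/Pochhammer ratio (your elementary identity is exactly the paper's citation of Gradshteyn--Ryzhik 0.160.2), and handle the Poisson--Dirichlet case by conditioning on $Z$ and integrating against its gamma density via $\sum_{m=0}^l (\theta/\gamma)_m/m! = (\theta/\gamma+1)_l/l!$. Your added justification of the interchange of $\lim_n$ and $\E_Z$ in the Pitman--Yor case is a point the paper glosses over, but it does not change the substance of the argument.
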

\begin{proof} First of all, note that, in both cases, it is immediate to check that $\sup_{n} I_n(l)<\infty$, for every $l=1,\dots,k$, with $k$ such that $\gamma<\frac{1}{k+1}$. From equation \eqref{Momenti}, it is easy to see that   
\begin{multline*}
\Psi(l)= - 
    \frac{\gamma-1}{\gamma^l}\,\edr^{u^\gamma/\gamma}\\
 \times\,
    \int_0^{+\infty} \int_0^{+\infty}
    [(u+x+y)^{\gamma}-(u+x)^{\gamma}]^l
    (u+x+y)^{\gamma-2}\edr^{-(u+x)^{\gamma}/\gamma}\: \ddr x\,\ddr y.
  \end{multline*}
A simple change of variable along with the integrability condition $\gamma<\frac{1}{k+1}$ lead to 
\[
\Psi(l)=-(1-\gamma)\,\edr^{u^{\gamma}/\gamma}\:
\Gamma(l+1,u^{\gamma}/\gamma)\:
\sum_{r=0}^l \binom{l}{r} (-1)^{l-r}\frac{1}{(r+1)\gamma-1}\\
\]
where $\Gamma(a,x)=\int_x^{\infty} z^{a-1}e^{-z}dz$ is the incomplete gamma function, for any $x>0$. 
From 0.160.2 in \cite{gr} it follows that
\begin{align*}
\Psi(l)
&=\frac{1-\gamma}{\gamma}\:\edr^{u^{\gamma}/\gamma}
\Gamma(l+1,u^{\gamma}/\gamma)B(l+1,(1/\gamma)-l-1)\\[4pt]
&=\frac{l!}{((1/\gamma)-l-1)_{l+1}}\:\edr^{u^{\gamma}/\gamma}\:
\Gamma(l+1,u^{\gamma}/\gamma)
\end{align*}
Since $l\in\mathbb{N}$, then $\Gamma(l+1,u^{\gamma}/\gamma)$ is a polynomial in $(u^{\gamma}/\gamma)$, i.e.
\begin{equation*}\label{incompleta}
\Gamma(l+1,u^{\gamma}/\gamma)=e^{-u^{\gamma}/\gamma}\sum_{m=0}^l \frac{l!}{m!}\:\Big(\frac{u^{\gamma}}{\gamma}\Big)^m
\end{equation*}
and this concludes the proof of equation \eqref{GGPSI}. Equation \eqref{PYPSI} is proved by using the characterization of the two-parameter Poisson-Dirichlet process through mixture of generalized Gamma processes. It is straightforward to see that 
\begin{equation}\label{condizionale}
\lim_{n\rightarrow\infty} \mathbb{E}[S_n^k|Z=z]=
\sum_{l=1}^k  a_l^{k}\frac{(l!)^2}{(\frac{1}{\gamma}-l-1)_{l}}\sum_{m=0}^l \frac{z^{m}}{m!}
\end{equation}
Integrating over $z$, and noting that
$$\sum_{m=0}^l \frac{(\theta / \gamma)_m}{m!}=\frac{(\theta/\gamma+1)_l}{l!}$$
provides the desired result.
\end{proof}
Just to give an idea of the behaviour of the role of the parameters $(\theta,\gamma)$ in determining the search cost distribution, we plot in Figure 1 the first and the second moment of the stationary search cost.

\medskip

\noindent {\bf Remark: }It is possible to recover the result in \cite{llp} displayed in Theorem 1. Indeed, $u=0$ in equation \eqref{GGPSI} leads to their result. On the other hand, when $\gamma$ goes to zero, it is easy to see, in equation \eqref{PYPSI}, that
$$\lim_{n\rightarrow +\infty}\mathbb{E}[S_n^k]=\sum_{l=1}^k a_l^{(k)}\frac{\theta^l}{l!}.$$
From the equation above, one can recover the result for $k=1$ provided in \cite{kingman}. 
\begin{figure}[H]\label{Figura}
\centering
\begin{subfigure}{0.48\textwidth}
\centering
	 \raisebox{-\height}{\includegraphics[width=0.9\textwidth]{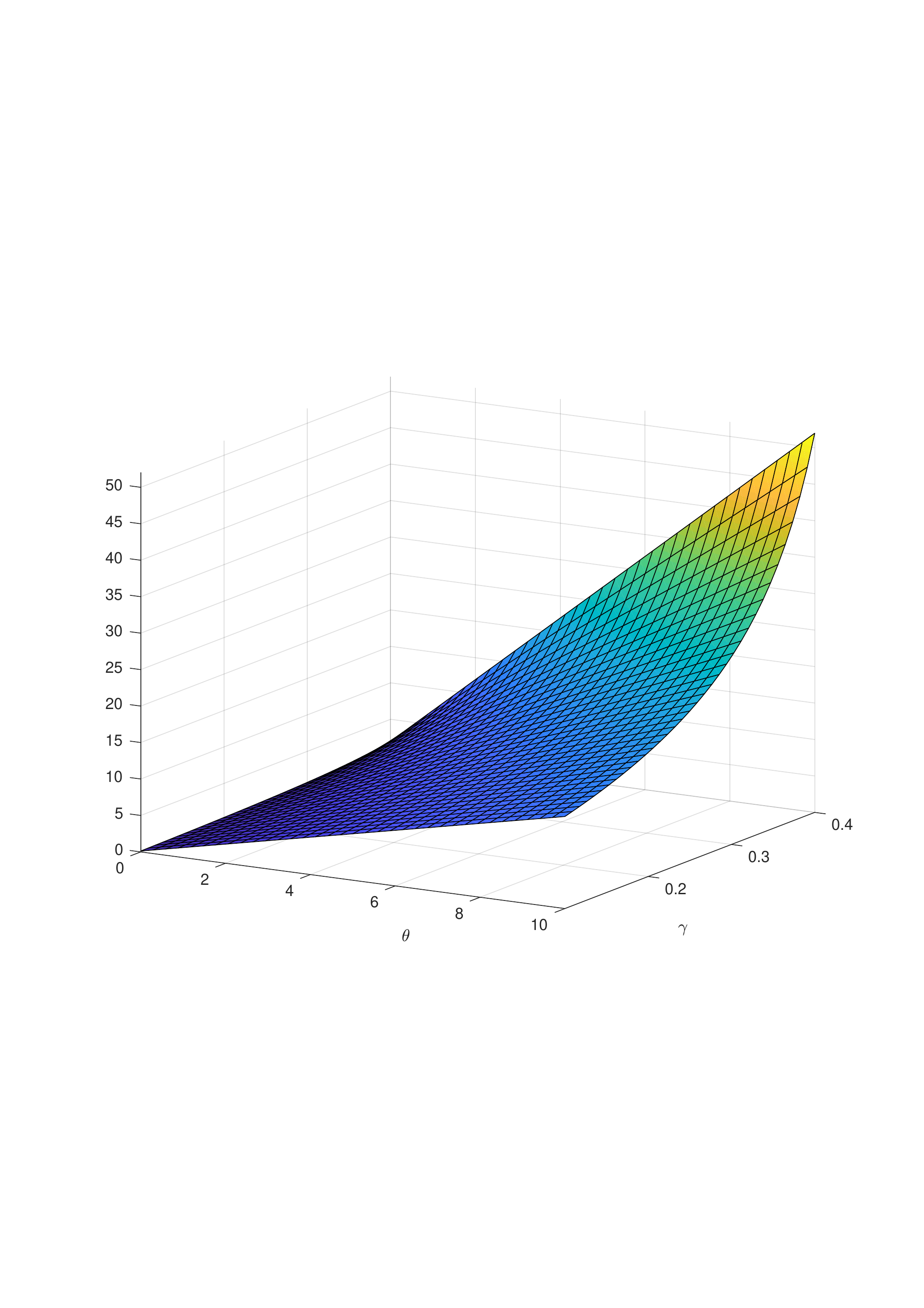} }
\end{subfigure}
\hspace{1em}
\begin{subfigure}{0.48\textwidth}
\centering
	\raisebox{-\height}{ \includegraphics[width=0.9\textwidth]{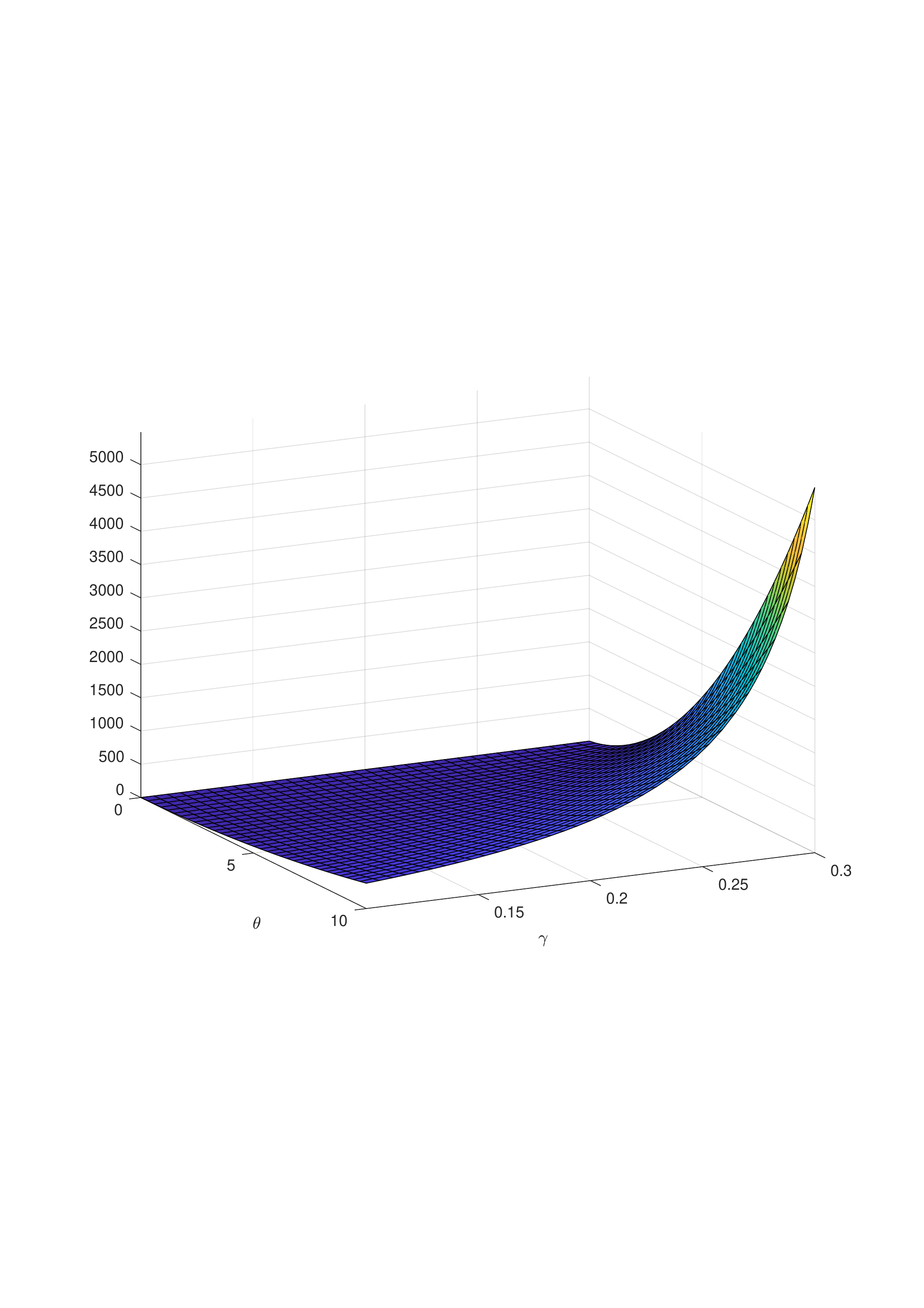} }
\end{subfigure}
\caption{The left figure shows the limiting behaviour of the first moment when $0<\theta<10$ and $0.1<\gamma<0.4$. The right figure shows the limiting behaviour of the first moment when $0<\theta<10$ and $0.1<\gamma<0.3$.}
\end{figure}

%

\end{document}